\newif\if@restonecol
\newtheorem{theorem}{Theorem}
\newtheorem{example}{Example}
\newtheorem{lemma}[theorem]{Lemma}
\newcommand{\G}{\Box}
\newcommand{\GF}{\Box \Diamond}
\newcommand{\X}{\bigcirc}
\newcommand{\F}{\Diamond}
\newcommand{\Tc}{\mathcal{T}_c}
\newcommand{\set}[1]{\left\{ #1 \right\}}
\newcommand{\seq}[1]{\langle #1 \rangle}
\begin{document}

\title{Counter-Strategy Guided Refinement of GR(1) Temporal Logic Specifications}  
\author{\IEEEauthorblockN{Rajeev Alur, Salar Moarref, and Ufuk Topcu}
\IEEEauthorblockA{University of Pennsylvania, Philadelphia, USA. \tt{\{alur,moarref,utopcu\}@seas.upenn.edu}}
\thanks{This research was partially supported by NSF Expedition in
Computing project ExCAPE (grant CCF 1138996), and AFOSR (grant number FA9550-12-1-0302).}}

\maketitle

\begin{abstract}
The reactive synthesis problem is to find a finite-state controller that satisfies a given
temporal-logic specification regardless of how its environment behaves.
Developing a formal specification is a challenging and tedious task and initial specifications are
often unrealizable.  In many cases, the source of unrealizability is the lack of adequate assumptions on the
environment of the system.  In this paper, we consider the problem of automatically correcting an unrealizable
specification given in the generalized reactivity (1) fragment of linear temporal logic
by adding assumptions on the environment.  When a temporal-logic specification is unrealizable,
the synthesis algorithm computes a counter-strategy as a witness.
Our algorithm then analyzes this counter-strategy
and synthesizes a set of candidate environment assumptions that can
be used to remove the counter-strategy from the environment's possible behaviors.
We demonstrate the applicability of our approach with several case studies.
\end{abstract}

%

\section{Introduction}
Automatically synthesizing a system from a high-level specification is an ambitious goal in the design of reactive systems.
The synthesis problem is to find a system that satisfies the specification regardless of how its environment behaves. 
Therefore, it can be seen as a two-player game between the environment and the system.
The environment attempts to violate the specification while the system tries to satisfy it. 
A specification is \emph{unsatisfiable} if there is no input and output trace that satisfies the specification. 
A specification is \emph{unrealizable} if there is no system that can implement the specification. 
That is, the environment can behave in such a way that no matter how the system reacts, the specification would be violated. 
In this paper we consider specifications which are satisfiable but unrealizable. 
We address the problem of strengthening the constraints over the environment by adding assumptions in order to achieve realizability. 

Writing a correct and complete formal specification which conforms to the (informal) design intent is a hard and tedious task \cite{ChatterjeeHJ08, debugCS}. 
Initial specifications are often incomplete and unrealizable.
Unrealizability of the specification is often due to inadequate environment assumptions. 
In other words, assumptions about the environment  are too weak, 
leading to an environment with too many behaviors that make it impossible for the system to satisfy the specification. 
Usually there is only a rough and incomplete model of the environment in the design phase; thus it is easy to miss 
assumptions on the environment side. 
We would like to automatically find such \emph{missing} assumptions that can be added to the specification and make it realizable. 
Computed assumptions can be used to give the user insight into the specification. 
They also provide ways to correct the specification. 
In the context of compositional synthesis \cite{kupferman2006safraless,ozay2011distributed}, 
derived assumptions based on the components specifications can be used to construct interface rules between the components. 

An unrealizable specification cannot be executed or simulated which makes its debugging a challenging task.  
Counter-strategies are used to explain the reason for unrealizabilty of linear temporal logic (LTL) specifications \cite{debugCS}. 
Intuitively, a counter-strategy defines how the environment can react to the outputs of the system in order to enforce the system to violate the specification. 
Konighofer et al. in \cite{debugCS} show how such a counter-strategy can be computed for an unrealizable LTL specification. 
The requirement analysis tool RATSY \cite{Bloem_ratsy} implements their method for a fragment of LTL known as generalized reactivity (1) (GR(1)). 
We also consider GR(1) specifications in this paper because 
the realizability and synthesis problems for GR(1) specifications can be solved efficiently in polynomial time and
GR(1) is expressive enough to be used for interesting real-world problems \cite{bloem2012synthesis,wongpiromsarn2010receding}.

Counter-strategies can still be difficult to understand by the user especially for larger systems. 
We propose a debugging approach which uses the counter-strategies to strengthen the assumptions on the environment in order to make the specification realizable. 
For a given unrealizable specification, our algorithm analyzes the counter-strategy and synthesizes a set of \emph{candidate} assumptions in 
the GR(1) form (see section \ref{sec:prelims}). 
Any of the computed candidate assumptions, if added to the specification, restricts the environment in such a way that it cannot behave according to 
the counter-strategy---without violating its assumptions---anymore. 
Thus we say the counter-strategy is ruled out from the environment's possible behaviors by adding the candidate assumption to the specification.

The main flow for finding the missing environment assumptions is as follows.
If the specification is unrealizable, a counter-strategy is computed for it. A set of \emph{patterns} are then synthesized by processing an abstraction of the counter-strategy. 
Patterns are LTL formulas of special form that define the structure for the candidate assumptions. 
We ask the user to specify a set of variables to be used for generating candidates for each pattern. 
The user can specify the set of variables which she thinks contribute to unrealizability or are underspecified. 
The variables are used along with patterns to generate the candidate assumptions.  
Any of the synthesized assumptions can be added to the specification to rule out the counter-strategy. 
The user can choose an assumption from the candidates in an interactive way or our algorithm can automatically search for it. 
The chosen assumption is then added to the specification and the process is repeated with the new specification. 

The contributions of this paper are as follows:
We propose algorithms to synthesize environment assumptions by directly processing the counter-strategies. 
  We give a counter-strategy guided synthesis approach that finds the missing environment assumptions. 
The suggested refinement can be validated by the user to ensure compatibility with her design intent and can be added to the specification to make it realizable.
We demonstrate our approach with examples and case studies.

The problem of correcting an unrealizable LTL specification by constructing an additional environment assumption is studied by Chatterjee et al. 
in \cite{ChatterjeeHJ08}. 
 They give an algorithm for computing the assumption which only constrains the environment and is as weak as possible. 
Their approach is more general than ours as they consider general LTL specifications. 
However, the synthesized assumption is a B\"{u}chi automaton which might not translate to an LTL formula and can be difficult for the user to understand 
(for an example, see Fig. $3$ in \cite{ChatterjeeHJ08}). 
Moreover, the resulting specification is not necessarily compatible with the design intent \cite{LiDS11}. 
 Our approach generates a set of assumptions in GR(1) form that can easily be validated by the user and be used to make the specification realizable.

The closest work to ours is the work by Li et al. \cite{LiDS11} where they propose 
a template-based specification mining approach to find additional assumptions on the environment 
that can be used to rule out the counter-strategy. 
A template is an LTL formula with at least one placeholder, $?_b$, that can be instantiated by the Boolean variable $b$ or its negation. 
Templates are used to impose a particular structure on the form of generated candidates and are engineered by the user based on her knowledge of the environment. 
A set of candidate assumptions is generated by enumerating all possible instantiations of the defined templates. 
 For a given counter-strategy, their method finds an assumption from the set of candidate assumptions which is satisfied by the counter-strategy. 
By adding the negation of such an assumption to the specification, they remove the behavior described by the counter-strategy from the environment. 
Similar to their work, we consider unrealizable GR(1) specifications and achieve realizability by adding environment assumptions to the specification. 
But, unlike them, we directly work on the counter-strategies to synthesize a set of candidate assumptions that can be used to rule out the counter-strategy.  
Similar to templates, patterns impose structure on the assumptions. 
However, our method synthesizes the patterns based on the counter-strategy and the user does not need to manipulate them. 
We only require the user to specify  a subset of variables to be used in the search for the missing assumptions. 
The user can specify a subset that she thinks leads to the unrealizability. 
In our method, the maximum number of generated assumptions for a given counter-strategy is independent from what subset of variables is considered, 
whereas increasing the size of the chosen subset of variables in \cite{LiDS11} will result in exponential growth in the number of candidates, 
while only a small number of them might hold over all runs of the counter-strategy (unlike our method).
Moreover, we compute the weakest environment assumptions for the considered structure and given subset of variables. 
Our work takes an initial step toward bridging the gap between \cite{ChatterjeeHJ08} and \cite{LiDS11}. 
Our method synthesizes environment assumptions that are simple  formulas, making them easy to understand and practical, 
and they also constrain the environment as weakly as possible within their structure.
We refer the reader to \cite{LiDS11} for a survey of related work.     
 
  

\section{Preliminaries}
\label{sec:prelims}

%
%
%
%


Linear temporal logic (LTL) is a formal specification language with two kinds of operators: 
logical connectives (negation ($\neg$), disjunction ($\vee$), conjunction ($\wedge$) and 
implication ($\rightarrow$)) and temporal modal operators 
(next ($\bigcirc$), always ($\Box$), eventually ($\Diamond$)
 and until ($\mathcal{U}$)). 
 Given a set $P$ of atomic propositions, an LTL formula is defined inductively as follows: $1$) any atomic proposition $p \in P$ is an LTL formula. 
$2$)  if $\phi$ and $\psi$ are LTL formulas, then $\neg \phi$, $\phi \vee \psi$, $\bigcirc \phi$ and $\phi \,\mathcal{U}\, \psi$ are also LTL formulas.
Other operators can be defined using the following rules: 
$\phi \wedge \psi = \neg (\neg \phi \vee \neg \psi)$, $\phi \rightarrow \psi = 
\neg \phi \vee \psi$, $\Diamond \phi = \tt{True} \,\mathcal{U}\, \phi$ and $\Box \phi = \neg \Diamond \neg \phi$.
An LTL formula is interpreted over infinite words $\omega \in (2^P)^\omega$. 
 For an LTL formula $\phi$, 
we define its language $\mathcal{L}(\phi)$ to be the set of infinite words that satisfy $\phi$, i.e., $\mathcal{L}(\phi)=\set{\omega \in (2^P)^\omega~|~ \omega \models \phi}$.  
 
A finite transition system (FTS)  is a tuple $\mathcal{T}=\seq{Q,Q_0,\delta}$ where $Q$ is a finite set of states, $Q_0 \subseteq Q$ is 
 the set of initial states and $\delta \subseteq Q \times Q$ is the transition relation. 
 An \emph{execution} or \emph{run} of a FTS is an infinite sequence of states $\sigma = q_0q_1q_2...$ where $q_0 \in Q_0$ and for any $i\geq0$, $q_i \in Q$ and 
 $(q_i,q_{i+1}) \in \delta$.
The language of a FTS $\mathcal{T}$ is defined as the set $\mathcal{L}(\mathcal{T})=\set{\omega \in Q^\omega ~|~ \omega \text{ is a run of } \mathcal{T}}$, i.e., 
the set of (infinite) words generated by the runs of $\mathcal{T}$. We often consider a finite transition system as a directed graph with a natural bijection 
between the states and transitions of the FTS and vertices and edges of the graph, respectively. 
Formally for a FTS $\mathcal{T}=\seq{Q,Q_0,\delta}$, we define the graph $\mathcal{G_T}=\seq{V,E}$ where 
each $v_i \in V$ corresponds to a unique state  $q_i \in Q$, and $(v_i,v_j) \in E$  if and only if $(q_i,q_j) \in \delta$.  

 

Let $P$ be a set of atomic propositions, partitioned into input, $I,$ and output, $O,$ propositions. A {\it Moore transducer} is a tuple $M=(S, s_0, \mathcal{I}, \mathcal{O}, \delta, \gamma)$, 
where $S$ is the set of states, $s_0 \in S$ is the initial state, $\mathcal{I}=2^I$ is the input alphabet, $\mathcal{O}=2^O$ is the output alphabet, 
$\delta: S \times \mathcal{I} \rightarrow S$ is the transition function and $\gamma: S \rightarrow \mathcal{O}$ is the state output function. 
A {\it Mealy} transducer is similar, except that the state output function is $\gamma: S \times \mathcal{I} \rightarrow \mathcal{O}$.
For an infinite word $\omega \in \mathcal{I}^\omega$, a run of $M$ is the infinite sequence $\sigma \in S^\omega$ such that $\sigma_0 = s_0$ and 
for all $i \geq 0$ we have $\sigma_{i+1}=\delta(\sigma_i,\omega_i)$. 
The run $\sigma$ on input word $\omega$ produces an infinite word $M(\omega) \in (2^P)^\omega$ such that $M(\omega)_i = \gamma(\sigma_i) \cup \omega_i$
 for all $i \geq 0$. The language of $M$ is the set $\mathcal{L}(M)=\{M(\omega)~|~ \omega \in \mathcal{I}^\omega\}$ of infinite words generated by runs of 
$M$. 

An LTL formula $\phi$ is \emph{satisfiable} if there exists an infinite word $\omega \in (2^P)^\omega$ such that $\omega \models \phi$. 
A Moore (Mealy) transducer $M$ satisfies an LTL formula $\phi$, written as $M \models \phi$, if $\mathcal{L}(M) \subseteq \mathcal{L}(\phi)$. 
An LTL formula $\phi$ is \emph{Moore (Mealy) realizable} if there exists a Moore (Mealy, respectively) transducer $M$ such that $M \models \phi$. 
The \emph{realizability problem} asks whether there exists such a transducer for a given LTL specification $\phi$.

A \emph{two-player deterministic game graph} is a tuple $\mathcal{G}=(Q,Q_0,E)$ where $Q$ can be partitioned into two disjoint sets 
$Q_1$ and $Q_2$. $Q_1$ and $Q_2$ are the sets of states of player $1$ and $2$, respectively. 
$Q_0$ is the set of initial states. 
$E=Q \times Q$ is the set of directed edges. 
Players take turns to play the game. At each step, if the current state belongs to $Q_1$, player $1$ chooses the next state. 
Otherwise player $2$ makes a move. 
A \emph{play} of the game graph $\mathcal{G}$ is an infinite sequence $\sigma = q_0q_1q_2...$ of states such that $q_0 \in Q_0$, and $(q_i,q_{i+1}) \in E$ for all $i \geq 0$. 
We denote the set of all plays by $\Pi$.
A \emph{strategy} for player $i \in \{1,2\}$ is a function $\alpha_i:Q^*. Q_i \rightarrow Q$ 
that chooses the next state given a finite sequence of states which ends at a player $i$ state. 
A strategy is \emph{memoryless} if it is a function of current state of the play, i.e., $\alpha_i: Q_i \rightarrow Q$. 
Given strategies $\alpha_1$ and $\alpha_2$ for players and a state $q \in Q$, the \emph{outcome} is the play 
starting at $q$, and evolved according to $\alpha_1$ and $\alpha_2$. 
Formally, $outcome(q,\alpha_1,\alpha_2)=q_0q_1q_2...$ where $q_0 = q$, and for all $i \geq 0$ we have  $q_{i+1}=\alpha_1(q_0q_1...q_i)$ if $q_i \in Q_1$ and $q_{i+1}=\alpha_2(q_0q_1...q_i)$ if $q_i \in Q_2$. 
An \emph{objective} for a player is a set $\Phi \subseteq \Pi$ of plays. 
A strategy $\alpha_1$ for player $1$ is winning for some state $q$ if for every strategy $\alpha_2$ of player $2$, 
we have $outcome(q, \alpha_1, \alpha_2) \in \Phi$. 

Given an LTL formula $\phi$ over $P$ and a partitioning of $P$ into $I$ and $O$, 
the \emph{synthesis problem} is to find a Mealy transducer $M$ with input alphabet $\mathcal{I}=2^I$ and output alphabet $\mathcal{O}=2^O$ that satisfies $\phi$. 
This problem can be reduced to computing winning strategies in game graphs. 
A deterministic game graph $G$, and an objective $\Phi$ can be constructed such that $\phi$ is realizable if and only if the system (player $1$) has a memoryless winning strategy from the initial state in $G$ \cite{pnueli1989synthesis}. 
Every memoryless winning strategy of the system can be represented by a Mealy transducer that satisfies $\phi$.
If the specification $\phi$ is unrealizable, then the environment (player $2$) has a winning strategy.
A \emph{counter-strategy} for the synthesis problem is a strategy for the environment that can falsify the specification, no matter how the system 
plays. Formally, a counter-strategy can be represented by a Moore transducer $M_c = (S',s_0',\mathcal{I}',\mathcal{O}', \delta', \gamma')$ that satisfies $\neg \phi$, 
where $\mathcal{I}'=\mathcal{O}$ and $\mathcal{O}'=\mathcal{I}$ are the input and output alphabet for $M_c$ which are generated by the system and the environment, respectively.

 In this paper, we consider specifications of the form 
\begin{equation}
\label{gr1}
  \phi = \phi_e \rightarrow \phi_s, 
\end{equation}
where $\phi_{\alpha}$ for $\alpha \in \{e,s\}$ can be written as a conjunction of the following parts:

\begin{itemize}
\item $\phi_i^{\alpha}$: A Boolean formula over $I$ if $\alpha=e$ and over $I \cup O$ otherwise, characterizing the initial state. 
\item $\phi_t^{\alpha}$: An LTL formula of the form $\bigwedge_i \Box \psi_i$. 
Each subformula $\G \psi_i$ is either characterizing an invariant, in which case $\psi_i$ is a Boolean formula over $I \cup O$, or it is characterizing a transition relation, 
in which case $\psi_i$ is a Boolean formula over expressions $v$ and $\bigcirc v'$ where $v \in I \cup O$ and, $v' \in I$ if $\alpha=e$ and $v' \in I \cup O$ if $\alpha = s$.
\item $\phi_g^{\alpha}$: A formula of the form $\bigwedge_i  \Box \Diamond B_i$ characterizing fairness/liveness, 
where each $B_i$ is a Boolean formula over $I \cup O$.
\end{itemize}

For the specifications of the form in \eqref{gr1}, known as GR(1) formulas, Piterman et al. \cite{pitermangr1}   
show that the synthesis problem can be solved in polynomial time. 
Intuitively, in \eqref{gr1}, $\phi_e$ characterizes the assumptions on the environment and $\phi_s$ characterizes the correct behavior (guarantees) of the system. 
Any correct implementation of the specification guarantees to satisfy $\phi_s$, provided that the environment satisfies $\phi_e$. 

For a given unrealizable specification $\phi_e \rightarrow \phi_s$, 
we define a \emph{refinement} $\psi = \bigwedge_i \psi_i$ as a conjunction of a collection of environment assumptions $\psi_i$ in the GR(1) form such that 
$\phi_e \wedge \psi \rightarrow \phi_s$ is realizable. 
Intuitively it means that adding the assumptions $\psi_i$ to the specification results in a new specification which is realizable. 
We say a refinement $\psi$ is consistent with the specification $\phi_e \rightarrow \phi_s$ if $\phi_e \wedge \psi$ is satisfiable. 
Note that if $\phi_e \wedge \psi$ is not satisfiable, i.e., $\phi_e \wedge \psi = \tt{False}$, the specification $\phi_e \wedge \psi \rightarrow \phi_s$ is 
trivially realizable \cite{LiDS11}, but obviously $\psi$ is not an interesting refinement. 


\section{Problem Statement and Overview}
\label{PSOM}
\subsection{Problem Statement} Given a specification $\phi = \phi_e \rightarrow \phi_s$ in the GR(1) form which is satisfiable but unrealizable, 
find a  refinement $\psi = \bigwedge_i \psi_i$ as a conjunction of environment assumptions $\psi_i$ 
such that $\phi_e \wedge \psi$ is satisfiable 
and $\phi_e \wedge \psi \rightarrow \phi_s$ is realizable.

\subsection{Overview of the Method}
\label{overview}
We now give a high-level view of our method. Specification refinements are constructed in two phases.  
First, given a counter-strategy's Moore machine $M_c$, we build an abstraction which is a finite transition system $\Tc$. 
The abstraction preserves the structure of the counter-strategy (its states and transitions) while removing the input and output details.
The algorithm processes $\Tc$ and synthesizes a set of LTL formulas in special forms, called \emph{patterns}, which hold over \emph{all} runs of $\Tc$.  
Our algorithm then uses these patterns along with a subset of variables specified by the user to generate a set of LTL 
formulas which hold over \emph{all} runs of $M_c$. 
We ask the user to specify a subset of variables which she thinks contribute to the unrealizability of the specification. 
This set can also be used to guide the algorithm to generate formulas over the set of variables which are underspecified. 
Using a smaller subset of variables leads to simpler formulas that are easier for the user to understand.

The complement of the generated formulas form the set of candidate assumptions that can be used to rule out the counter-strategy from the environment's possible behaviors. 
We remove the candidates which are not consistent with the specification in order to avoid a trivial solution $\tt{False}$. 

Any assumption from the set of generated candidates can be used to rule out the counter-strategy. 
Our approach does a breadth-first search over the candidates. If adding any of the candidates makes the specification realizable, 
the algorithm returns that candidate as a solution. 
Otherwise at each iteration, the process is repeated for any of the new specifications resulting from 
adding a candidate. 
The depth of the search is controlled by the user. 
The search continues until either a consistent refinement is found or the algorithm cannot find one within the specified depth (hence the search algorithm is sound, but not complete). 


\begin{figure}
\centering
\begin{subfigure}{.5\linewidth}
  \centering
  \includegraphics[width=\textwidth , scale=0.07]{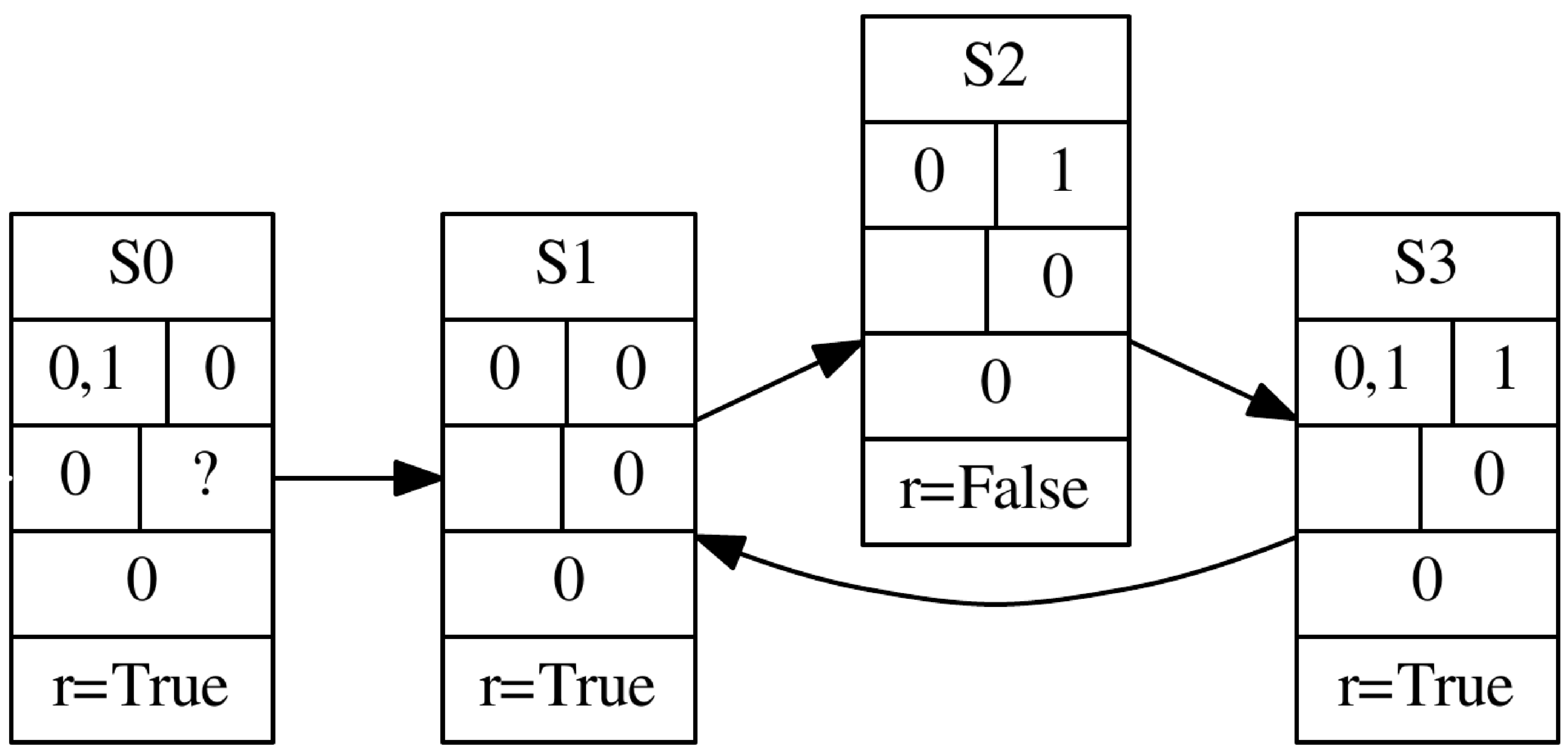}
  \caption{}
\end{subfigure}%
\hspace{0.5cm}
\begin{subfigure}{.4\linewidth}
  \centering
  \begin{tikzpicture}[->,>=stealth',shorten >=1pt,auto,node distance=2cm, semithick]
  \tikzstyle{every state}=[text=black, scale=0.7]

  \node[state] (A)                    {$q_0$};
  \node[state]         (B) [right of=A] {$q_1$};
  \node[state]         (C) [above right of=B] {$q_2$};
  \node[state]	       (D) [below right of=C] {$q_3$};
  \node at (0,-0.6) {};
  
  \path (A) edge  node {} (B)
        (B) edge  node {} (C)
	(C) edge  node {} (D)
	(D) edge  node {} (B);
  \end{tikzpicture}
  \caption{}
\end{subfigure}
\caption{(a) A counter-strategy produced by RATSY for the specification of Example \ref{ex1} with the additional assumption $\GF(\neg r)$. $c=\tt{True}$ is constant in all states.
(b) The abstract finite transition system for the counter-strategy of part (a).} \vspace{-0.15in}
\label{counter-strategy-example2}
\end{figure}

\begin{example}
\label{ex1}
 Consider the following example borrowed from \cite{LiDS11} with the environment variables $I=\set{r,c}$ and system variables $O=\set{g,v}$. 
Here $r,c,g$ and $v$ stand for \emph{request}, \emph{clear}, \emph{grant} and \emph{valid} signals respectively.
We start with no assumption, that is we only assume $\phi_e=\tt{True}$. Consider the following system guarantees: $\phi_1= \G(r \rightarrow \X\F g)$, 
$\phi_2= \G((c \vee g) \rightarrow \X \neg g)$, $\phi_3= \G(c \rightarrow \neg v)$ and $\phi_4= \G\F (g \wedge v)$.
Let $\phi_s$ be the conjunction of these formulas.
$\phi_1$ requires that every request must be granted eventually starting from the next step by setting signal $g$ to high. 
$\phi_2$ says that if clear or grant signal is high, then grant must be low at the next step. 
$\phi_3$ says if clear  is high, then the valid signal must be low. 
Finally, $\phi_4$ says that system must issue a valid grant infinitely often. 

The specification $\phi_e \rightarrow \phi_s$ is unrealizable. 
A simple counter-strategy is for the environment to keep $r$ and $c$ high at all times. 
Then, by $\phi_3$, $v$ needs to be always low and thus $\phi_4$ cannot be satisfied by any system. 
RATSY produces this counter-strategy 
which is then fed to our algorithm. 
An example candidate found by our algorithm to rule out this counter-strategy is the assumption $\psi=\GF(\neg r)$. 
Adding $\psi$ to the specification prevents the environment from always keeping $r$ high, thus the environment cannot use the counter-strategy anymore. 
However, the specification $\phi_e \wedge \psi \rightarrow \phi_s$ is still unrealizable. 
RATSY produces the counter-strategy shown in Figure \ref{counter-strategy-example2}(a) for the new specification. 
The new counter-strategy keeps the $c$ high all the times. 
The value of $r$ is changed depending on the state of the counter-strategy as shown in Figure \ref{counter-strategy-example2}(a). 
The top block in each state of Figure \ref{counter-strategy-example2}(a) is the name of the state. 
RATSY produces additional information, shown in middle blocks, on how the counter-strategy enforces the system to violate the specification. 
We do not use this information in the current version of the algorithm. 

The following formulas are examples of consistent refinements produced by our algorithm for the specification $\phi_e \rightarrow \phi_s$: 

\begin{itemize}
 \item  $\psi_1=\G(\neg r \vee \neg c) \wedge \G(r \vee \neg c)$
  \item $\psi_2=\G(r \rightarrow \X \neg c) \wedge \G(\neg r \rightarrow \X \neg c)$
  \item $\psi_3=\GF(\neg r) \wedge \G(\neg c \vee r) \wedge \G(\neg r \rightarrow \X \neg c))$
\end{itemize}

Assumptions in both of the refinements $\psi_1$ and $\psi_2$  imply $\G(\neg c)$, that is, adding them requires the environment to keep the signal $c$ always low. 
Although adding these assumptions make the specification realizable, it may not conform to the design intent. 
Refinement $\psi_3$ does not restrict $c$ like $\psi_1$ and $\psi_2$, and only assumes that the environment sets the signal $r$ to low infinitely often and 
that, when the request signal is low, the clear signal should be low at the same and the next step.  

\end{example}

\section{Specification Refinement}
\label{sec:method}
Algorithm \ref{findAssumptions} finds environment assumptions that can be added to the specification to make it realizable. 
It gets as input the initial unrealizable specification $\phi = \phi_e \rightarrow \phi_s$, the set $P$ of subsets of variables 
to be used in generated assumptions and the maximum depth $\alpha$ of the search. 
It outputs a consistent refinement $\psi$,  if it can find one within the specified depth. 

For an unrealizable specification, a counter-strategy is computed as a Moore transducer using the techniques in \cite{debugCS, Bloem_ratsy}. 
The counter-strategy is then fed to the \textbf{GeneratePatterns} procedure which constructs a set of patterns 
and is detailed in Section \ref{patterns-synthesis}. 
Procedure \textbf{GenerateCandidates}, described in Section \ref{sec:generateCandidates}, 
 produces a set of candidate assumptions in the form of GR(1) formulas using patterns and the set $P$ of variables. 
Algorithm \ref{findAssumptions} runs a breadth-first search to find a consistent refinement. 
Each node of the search tree is a generated candidate assumption, while the root of the tree corresponds to the assumption $\tt{True}$ (i.e., no assumption). 
Each path of the search tree starting from the root corresponds to a candidate refinement as conjunction of candidate assumptions of the nodes visited along the path. 
When a node is visited during the search, its corresponding candidate refinement is added to the specification. 
If the new specification is consistent and realizable, the refinement is returned by the algorithm. 
Otherwise, if the depth of the current node is less than the maximum specified, a set of candidate assumptions are generated based on the counter-strategy for the new specification and the search tree expands. 

In Algorithm \ref{findAssumptions}, the queue {\it CandidatesQ} keeps the candidate refinements  which are found during the search. 
At each iteration, a candidate refinement $\psi$ is removed from the head of the queue.
The procedure \textbf{Consistent} checks if $\psi$ is consistent with the specification $\phi$. 
If it is, the algorithm checks the realizability of the new specification $\phi_{new}=\phi_e \wedge \psi \rightarrow \phi_s$ using the procedure 
\textbf{Realizable} \cite{bloem2012synthesis,Bloem_ratsy}. 
If $\phi_{new}$ is realizable, $\psi$ is returned as a suggested refinement. 
Otherwise,  if the depth of the search for reaching the candidate refinement $\psi$ is less than $\alpha$, 
a new set of candidate assumptions are generated using the counter-strategy computed for $\phi_{new}$. 
Algorithm \ref{findAssumptions} keeps track of the number of counter-strategies produced along the path to reach a candidate refinement in order to compute its depth (\textbf{Depth($\psi$)}). 
Each new candidate assumption $\psi_{new}$ results in a new candidate refinement $\psi \wedge \psi_{new}$ which is added to the end of the queue for future processing .  
The algorithm terminates when either a consistent refinement $\psi$ is found, or there is no more candidates in the queue to be processed.  

\begin{algorithm}[t]
  \KwIn{$\phi = \phi_e \rightarrow \phi_s$, initial specification}
  \KwIn{$P$, set of subsets of variables to be used in patterns}
  \KwIn{$\alpha$, maximum depth of the search }
  \KwOut{$\psi$, additional assumptions such that $\phi_e \wedge \psi \rightarrow \phi_s$ is realizable}
  
  $M_c :=$ \textbf{CounterStrategy}($\phi$)\;
  Patterns := \textbf{GeneratePatterns}($M_c$)\;
  CandidatesQ := \textbf{GenerateCandidates}(Patterns,$P$)\;
  \While{CandidatesQ is not \textbf{Empty}}{
  	 $\psi$ :=  CandidatesQ.\textbf{DeQueue}\;
      \If{\textbf{Consistent}$(\phi$,$\psi)$}{
      	$\phi_{new} = \phi_e \wedge \psi \rightarrow \phi_s$\;
      	\If{\textbf{Realizable}($\phi_{new}$)} {return $\psi$\;}
      	\Else{
		\If{\textbf{Depth}$(\psi ) < \alpha$}{
			$M_c :=$ \textbf{CounterStrategy}($\phi_{new}$)\;
  			Patterns := \textbf{GeneratePatterns}($M_c$)\;
  			newCandidates := \textbf{GenerateCandidates}(Patterns,$P$) \;
			\ForEach{$\psi_{new} \in$ newCandidates}{
				CandidatesQ.\textbf{EnQueue($\psi \wedge \psi_{new}$)}\;
			}
	         }
	}
      }
   }
  return \textbf{No refinement was found}\;
  \caption{Specification Refinement}
  \label{findAssumptions} 
\end{algorithm}

\vspace{-0.05in}

\subsection{Generating Candidates}
\label{sec:generateCandidates}

Consider the Moore transducer $M_c = (S,s_0,\mathcal{I},\mathcal{O}, \delta, \gamma)$ of a counter-strategy, where $\mathcal{I}=2^O$ and $\mathcal{O}=2^I$, 
and $O$ and $I$ are the set of the system and environment variables, respectively. 
Given $M_c$, we construct a finite transition system $\Tc=\seq{Q,\set{q_0},\delta}$ which preserves the structure of 
the $M_c$ while removing all details about its input and output. More formally, for each state $s_i \in S$, $\Tc$ has a corresponding state $q_i \in Q$, and $q_0 \in Q$ is the state 
corresponding to $s_0 \in S$. There exists a transition $(q_i,q_j) \in \delta$ if and only if there exists $y \in \mathcal{I}$ such that $\delta(s_i,y)=s_j$. 
It is easy to see that any run of $\Tc$ corresponds to a run of $M_c$ and vice versa. 

By processing the abstract FTS $\Tc$ of the counter-strategy, we synthesize a set of patterns which are
LTL formulas of the form $\Diamond \Box \psi_1$, $\Diamond \psi_2$ and $\Diamond (\psi_3 \wedge \X \psi_4)$ that hold over all runs of $\Tc$. 
Each $\psi_i$ for $i \in \set{1,2,3,4}$ is a disjunction of a subset of states of $\Tc$, 
i.e., $\psi_i = \bigvee_{q \in Q_i} q$ where $Q_{i} \subseteq Q$. 
The complements of these formulas, $\Box \Diamond \neg \psi_1$ (liveness), $\Box \neg \psi_2$ (safety), and $\Box (\psi_3 \rightarrow \X \neg \psi_4)$ (transition), respectively,
 are of the desired GR(1) form and provide the structure for the candidate assumptions that can be used to rule out the counter-strategy. 
Note that similar to \cite{LiDS11}, we do not synthesize assumptions characterizing the initial state because they are easy to specify in practice. 
Besides, it is simple to discover them from the counter-strategy. 
Patterns are generated using simple graph search algorithms explained in Section \ref{patterns-synthesis}.

\begin{example}
\label{Patterns}
 Figure \ref{counter-strategy-example2}(b) shows the abstract FTS for the counter-strategy of Figure \ref{counter-strategy-example2}(a). 
 For this FTS our algorithm produces the set of patterns $\Diamond \Box(q_1 \vee q_2 \vee q_3)$, $\Diamond q_0, \Diamond q_1, \Diamond q_2, \Diamond q_3$, and 
$\Diamond( q_0 \wedge \X q_1),$ $\Diamond( q_1 \wedge \X q_2 ), \Diamond( q_2 \wedge \X q_3), \Diamond( q_3 \wedge \X q_1)$.
Any run of $\Tc$ satisfies all of the above formulas. 
For example $\Tc \models \F q_i$ for $i \in \set{0,1,2,3}$, meaning that any run of the $\Tc$ will eventually visit state $q_i$. 
The formula $\F (q_1 \wedge \X q_2)$ means that any run of $\Tc$ will eventually visit state $q_1$ and then state $q_2$ at the next step.
Also any run of $\Tc$ satisfies $\F\G (q_1 \vee q_2 \vee q_3)$, meaning that any run of $\Tc$ will eventually reach and stay in the set of states $\set{q_1,q_2,q_3}$. 
\end{example}
 
As we mentioned previously, each state $q_i \in Q$ of the FTS $\Tc$ corresponds to a state $s_i \in S$ of the Moore transducer $M_c$ of the counter-strategy. 
Also recall that each run of $\Tc$ corresponds to a run of $M_c$. 
$M_c$, at any state $s_i \in S$, outputs the propositional formula $\mathcal{V}_{s_i}=\gamma(s_i)$ which is a valuation over all environment variables. 
Formally, for any state $s_i \in S$ of $M_c$, we have 
$\mathcal{V}_{s_i} = \ell_1^i \wedge \ell_2^i \wedge ... \wedge \ell_n^i$ where each $\ell_j^i$ is a literal over the environment variable $x_j \in I$. 
We call $\mathcal{V}_{s_i}$ the state predicate of $s_i$ and also $q_i$. 
We replace the states in the patterns with their corresponding state predicates to get a set of formulas which 
hold over all runs of the counter-strategy. 

\begin{example}
 Consider the counter-strategy shown in Figure \ref{counter-strategy-example2}(a). 
The state predicates are $\mathcal{V}_{S0} = \mathcal{V}_{S1} = \mathcal{V}_{S3} = c \wedge r$ and $\mathcal{V}_{S2} = c \wedge \neg r$, 
where $S0,S1,S2$ and $S3$ are the states of $M_c$.
Using the patterns obtained in Example \ref{Patterns} and replacing the states with their corresponding state predicates,
we obtain LTL formulas which hold over all runs of $M_c$. 
For example, the pattern $\Diamond \Box(q_1 \vee q_2 \vee q_3)$ gives us the formula $\Diamond \Box((c \wedge r) \vee (c \wedge \neg r)) = \Diamond \Box c$. 
Replacing $q_2$ with $\mathcal{V}_{S2}$ in the pattern $\Diamond q_2$ leads to $\Diamond (c \wedge \neg r)$. 
Similarly, the pattern $\Diamond (q_1 \wedge \X q_2)$ gives $\Diamond ((c \wedge r) \wedge \X (c \wedge \neg r))$.
\end{example}

The structure of the state predicates and patterns is such that any subset of the environment variables can be used along with the patterns to generate candidates 
and the resulting formulas still hold over all runs of the counter-strategy. 
Algorithm \ref{findAssumptions} gets the set $P=\set{P_1,P_2,P_3,P_4}$ as input, where each $P_i$ is a subset of environment variables that should be used 
in the corresponding $\psi_i$ for generating the candidate assumptions from the patterns of the form $\Diamond \Box \psi_1$, $\Diamond \psi_2$ and $\Diamond (\psi_3 \wedge \X \psi_4)$.

\begin{example}
\label{subset}
 Assume that the designer specifies $P_1=\set{r}$, $P_2=\set{c}$, $P_3=\set{r,c}$ and $P_4=\set{c}$. 
Then the pattern $\Diamond \Box(q_1 \vee q_2 \vee q_3)$ results in  $\Diamond \Box( r \vee  \neg r \vee r) = \Diamond \Box \tt{True}$. 
From $\Diamond q_2$ we obtain $\Diamond c$, and $\Diamond (q_1 \wedge \X q_2)$ leads to $\Diamond ((c \wedge r) \wedge \X c)$.
Note that using a smaller subset of variables leads to simpler formulas (and sometimes trivial as in $\Diamond \Box (\tt{True})$). 
However, this simplicity may result in assumptions which put more constraints on the environment as we will show later.
\end{example}

The complement of the generated formulas form the set of candidate assumptions that can be used to rule out the counter-strategy.   
For instance, formulas $\GF (\neg r \wedge r) = \GF(\tt{False})$, $\G(\neg c)$, $\G((c \wedge r) \rightarrow \X(\neg c))$ and $\G((c \wedge \neg r) \rightarrow \X (\neg c))$ 
are the candidate assumptions computed based on the user input in Example \ref{subset}. 
Note that there might be repetitive formulas among the generated candidates. 
We remove the repeated formulas in order to prevent the process from checking the same assumption repeatedly. 
We also use some techniques to simplify the synthesized assumptions (see the Appendix).


%


\subsection{Removing the Restrictive Formulas}
Given two non-equivalent formulas $\phi_1$ and $\phi_2$ we say $\phi_1$ is \emph{stronger} than $\phi_2$ if $\phi_1 \rightarrow \phi_2$ holds. 
Assume $\psi_1$ and $\psi_2$ are two formulas that hold over all runs of the counter-strategy computed for the specification $\phi_e \rightarrow \phi_s$, 
and that $\psi_1 \rightarrow \psi_2$. 
Note that $\neg \psi_2 \rightarrow \neg \psi_1$ also holds, that is $\neg \psi_1$ is a \emph{weaker} assumption compared to $\neg \psi_2$. 
Adding either $\neg \psi_1$ or $\neg \psi_2$ to the environment assumptions $\phi_e$ rules out the counter-strategy. 
However, adding the stronger assumption $\neg \psi_2$ restricts the environment more than adding $\neg \psi_1$. 
That is, $\phi_e \wedge \neg \psi_2$ puts more constraints on the environment compared to $\phi_e \wedge \neg \psi_1$. 

As an example, consider the counter-strategy $M_c$ shown in Figure \ref{counter-strategy-example2}(a). 
Both $\psi_1=\F(c \wedge \neg r)$ and $\psi_2=\F(c)$  hold over all runs of $M_c$. 
Moreover, $\psi_1 \rightarrow \psi_2$. 
Consider the corresponding assumptions $\neg \psi_1 = \G(\neg c \vee r)$ and $\neg \psi_2 = \G(\neg c)$. 
Adding $\neg \psi_2$ restricts the environment more than adding $\neg \psi_1$. 
$\neg \psi_2$ requires the environment to keep the signal $c$ always low, whereas in case of $\neg \psi_1$, 
the environment is free to assign additional values to its variables. 
It only prevents the environment from setting $c$ to high and $r$ to low at the same time.  

We construct patterns which are strongest formulas of their specified form that hold over all runs of the counter-strategy. 
Therefore the generated candidate assumptions are the weakest formulas that can be constructed for the given structure and the user specified subset of variables. 


\subsection{Synthesizing Patterns}
\label{patterns-synthesis}

In this section we show how certain types of patterns can be synthesized using the abstract FTS $\Tc$ of the counter-strategy.
A pattern $\mathcal{P}$, is an LTL formula $\phi_{\mathcal{P}}$ which holds over all runs of the FTS $\Tc$, i.e., $\Tc \models \phi_{\mathcal{P}}$. 
We are interested in patterns of the form $\Diamond \Box \psi$, $\Diamond \psi$ and $\Diamond (\psi_1 \wedge \bigcirc \psi_2)$. 
The complements of these patterns are of the GR(1) form and, after replacing states with their corresponding state predicates, will yield to candidate assumptions 
for removing the counter-strategy. 

\subsubsection{Patterns of the Form $\Diamond \psi$}

For a FTS $\Tc = \seq{Q,\set{q_0}, \delta}$, we define a \emph{configuration} $C \subseteq Q$ as a subset of states of $\Tc$. 
We say a configuration $C$ is an \emph{eventually configuration} if for any run $\sigma$ of $\Tc$ there exists a state $q \in C$ and a time step $i \geq 0$ such that $\sigma_i = q$. 
That is, any run of $\Tc$ eventually visits a state from configuration $C$. 
It follows that if $C$ is an eventually configuration for $\Tc$, then $\Tc \models \F\bigvee_{q \in C} q$.
We say an eventually configuration $C$ is \emph{minimal} if there exists no $C' \subset C$ such that $C'$ is an eventually configuration. 
Note that removing any state $q \in C$ from a minimal eventually configuration leads to a configuration which is not an eventually configuration. 

Algorithm \ref{newEventuallyAlg} constructs eventually patterns which correspond to the minimal eventually 
configurations of $\Tc$ with size less than or equal to $\beta$. 
The larger configurations lead to larger formulas which are hard for the user to parse. 
The user can specify the value of $\beta$. Heuristics can also be used to automatically set $\beta$ based on the properties of $\Tc$, 
e.g. the maximum outdegree of the vertices in the corresponding directed graph $\mathcal{G}_{\mathcal{T}_c}$, 
where the outdegree of a vertex is the number of its outgoing edges.
In Algorithm \ref{newEventuallyAlg}, the set $\F\text{Configurations}$ keeps the minimal eventually configurations discovered so far. 
Algorithm \ref{newEventuallyAlg} initializes the sets Patterns and $\F\text{Configurations}$ to $\set{\F q_0}$ and $\set{q_0}$, respectively. 
Note that $\F q_0$ holds over all runs of $\Tc$. 
The algorithm then checks each possible configuration $Q' \subseteq Q-\set{q_0}$ with size less than or equal to $\beta$ 
in a non-decreasing order of $|Q'|$ to find minimal eventually configurations. 
Without loss of generality we assume that all states in $\Tc$ have outgoing edges\footnote{A transition from any state with no outgoing transition can be added to a dummy state with a self loop. Patterns which include the dummy state will be removed.}. 
At each iteration, a configuration $Q'$ is chosen. 
Algorithm \ref{newEventuallyAlg} checks if there is a minimal eventually configuration $Q''$ which is already discovered and $Q'' \subset Q'$. 
If such  $Q''$ exists, $Q'$ is not minimal. 
Otherwise, the algorithm checks if it is an eventually configuration 
by first removing all the states in $Q'$ and their corresponding incoming and outgoing transitions from $\Tc$ to obtain another FTS $\Tc'$. 
 Now, if there is an infinite run from $q_0$ in $\Tc'$, then there is a run in $\Tc$ that does not visit any state in $Q'$. 
Otherwise, $Q'$ is a minimal eventually configuration and is added to 
 $\F\text{Configurations}$. The corresponding formula $\psi$ = $\F \bigvee_{q \in Q'} q$ is also added to the set of eventually patterns. 
 Note that checking if there exists an infinite run in $\Tc'$ can be done by considering $\Tc'$ as a graph and checking if there is a reachable cycle from $q_0$, 
which can be done in linear time in number of states and transitions of $\Tc$. 
Therefore, the algorithm is of complexity $O(|Q|^\beta(|Q|+|\delta|))$. 

\begin{algorithm}[t]
  \KwIn{Finite state transition system $\Tc=\seq{Q,\set{q_0},\delta}$}
  \KwIn{$\beta$, maximum number of states in generated patterns}
  \KwOut{a set of patterns of the form $\Diamond \psi$ where $\Tc \models \Diamond \psi$}
  Patterns := $\set{\F q_0}$\;
  $\F \text{Configurations}$ := $\set{q_0}$\;
  \ForEach{$Q' \subseteq Q - \set{q_0}$ with non-decreasing order of $|Q'|$ where $|Q'| \leq \beta$}{
  	\If{$\not \exists Q'' \in \F \text{Configurations}$ s.t. $Q'' \subseteq Q'$}{
		Let $\Tc' = \seq{Q-Q', \set{q_0}, \delta'}$ where $\delta' = \set{(q,q') \in \delta | q \not \in Q' \wedge q' \not \in Q'}$\;
		\If{there is no infinite run from $q_0$ in $\Tc'$}{
			Add $Q'$ to $\F$Configurations\; 
			Let $\psi$ = $\F \bigvee_{q_i \in Q'} q_i$\;
			Add $\psi$ to Patterns\;
		}
	}
  }
  return Patterns\;
  \caption{Generating $\Diamond \psi$ patterns}
  \label{newEventuallyAlg}
\end{algorithm}

\begin{example}
\label{eventuallyPatternEx}
Consider the FTS shown in Figure \ref{exampleGraph}. Algorithm \ref{newEventuallyAlg} starts at initial configuration $\set{q_0}$ and 
generates the formula $\Diamond q_0$. None of $\set{q_1}$, $\set{q_2}$ or $\set{q_3}$ is an eventually configuration. 
For example for configuration $\set{q_1}$, there exists the run $\sigma = q_0, (q_3)^\omega$ which never visits $q_1$. 
Configurations $\set{q_1,q_3}$ and $\set{q_2,q_3}$ are minimal eventually configurations. 
For example removing $\set{q_1,q_3}$ will lead to a FTS with no infinite run (no cycle is reachable from $q_0$ in the corresponding graph). 
It is easy to see that configuration $\set{q_1,q_2}$ is not an eventually configuration. 
Configuration $\set{q_1,q_2,q_3}$ is not minimal, although it is an eventually configuration. 
Thus Algorithm \ref{newEventuallyAlg} returns the set of patterns $\set{\F q_0, \F (q_1 \vee q_3) , \F (q_2 \vee q_3)}$.

\end{example}


\subsubsection{Patterns of the Form $\Diamond \Box \psi$}

To compute formulas of the form $\Diamond \Box \psi$ which hold over all runs of the  
 FTS $\Tc=\seq{Q,\set{q_0},\delta}$ of the counter-strategy, we view $\Tc$ as a graph and separate 
its states into two groups: 
$Q^{cycle} \subseteq Q$, the set of states that are part of a cycle in $\Tc$ (including the cycle from one node 
 to itself), and $Q' = Q - Q^{cycle}$. 
Without loss of generality we assume that any state $q \in Q$ is reachable from $q_0$.
Therefore, any state $q \in Q^{cycle}$ belongs to a reachable strongly connected component $C$ of $\Tc$. 
Also for any strongly connected component $C$ of $\Tc$ , there exists a run $\sigma$ of $\Tc$ which reaches states in $C$ and keeps cycling there forever. 
Hence, the formula $\psi_1=\Diamond \Box \bigvee_{q \in C} q$ holds over the run $\sigma$. 
Indeed $\psi_1$ is the minimal formula of disjunctive form which holds over all runs that can reach the strongly connected component $C$. 
That is, by removing any of the states from $\psi_1$, one can find a run $\sigma'$ which can reach the strongly connected component $C$ and visit the removed state, 
falsifying the resulted formula. 
Therefore, eventually for any execution of $\Tc$, the state of the system will always be in one of the states $q \in Q^{cycle}$. 
Thus the formula $\psi = \Diamond \Box \bigvee_{q \in Q^{cycle}} q$ is the minimal formula of the form eventually always which holds over all runs of $\Tc$. 

To partition the states of the $\Tc$ into $Q^{cycle}$ and $Q'$ we use Tarjan's algorithm for computing strongly connected components of the graph. 
Thus the algorithm is of linear time complexity in number of states and transitions of $\Tc$. 

\begin{example}
 Consider the non-deterministic FTS shown in Figure \ref{exampleGraph}. It has three strongly connected components: $\set{q_0}$, $\set{q_1,q_2}$ 
and $\set{q_3}$. Only the latter two components include a cycle inside them, that is $Q^{cycle}=\set{q_1,q_2,q_3}$. 
Thus, the pattern $\psi=\Diamond \Box (q_1 \vee q_2 \vee q_3)$ is generated. 
Note that the possible runs of the system are $\sigma_1=q_0,(q_1,q_2)^\omega$ and $\sigma_2=q_0,(q_3)^\omega$. 
The generated pattern $\psi$ holds over both of these runs. 
Observe that removing any of the states in $\psi$ will result in a formula which is not satisfied by $\Tc$ any more.  
\end{example}

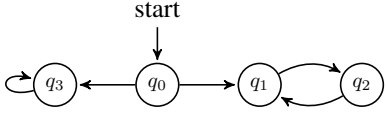
\begin{figure}[t]
\centering

 \begin{tikzpicture}[->,>=stealth',shorten >=1pt,auto,node distance=1.7cm,
                    semithick]
  \tikzstyle{state}=[circle, draw=black, scale=0.8]
  
  \node[state, initial, initial where=above] (A) {$q_0$};
  \node[state] (B) [right of=A] {$q_1$};
  \node[state] (C) [right of=B] {$q_2$};
  \node[state] (D) [left of=A] {$q_3$};

  \path (A) edge (B)
	    edge (D)
	(B) edge [bend left] (C)
	(C) edge [bend left] (B)
	(D) edge [loop left] (D);
 \end{tikzpicture}
\caption{A non-deterministic finite state transition system $\Tc$}
\label{exampleGraph} \vspace{-0.2in}
\end{figure} 

\subsubsection{Patterns of the Form $\Diamond (\psi_1 \wedge \bigcirc \psi_2)$}  
To generate candidates of the form $\Diamond (\psi_1 \wedge \bigcirc \psi_2)$, first note that $\Diamond (\psi_1 \wedge \bigcirc \psi_2)$ 
holds only if $\Diamond \psi_1$ holds. 
Therefore, a set of eventually patterns $\F \psi_1$ is first computed using Algorithm \ref{newEventuallyAlg}. 
Then for each formula $\Diamond \psi_1$, the pattern $\Diamond (\psi_1 \wedge \bigcirc \bigvee_{q \in Next(\psi_1)} q)$ is generated, where 
$Next(\psi_1)$ is the set of states that can be reached in one step from the configuration specified by $\psi_1$. 
Formally,  $Next(\psi_1)=\set{q_i \in Q ~|~ \exists q_j \in \mathcal{C} \text{ s.t. } (q_j,q_i) \in \delta}$ 
and $\mathcal{C}$ is the configuration 
represented by $\psi_1 = \bigvee_{q \in \mathcal{C}} q$. 
The most expensive part of this procedure is computing the eventually patterns, 
therefore its complexity is the same as Algorithm \ref{newEventuallyAlg}.
Due to the lack of space, the algorithms for computing $\Diamond \Box \psi$ and $\Diamond (\psi_1 \wedge \bigcirc \psi_2)$ patterns are given in the Appendix.

\begin{example}
 Consider the FTS shown in Figure \ref{exampleGraph}. 
Given the set of eventually formulas produced in Example \ref{eventuallyPatternEx}, 
patterns $\Diamond (q_0 \wedge \bigcirc (q_1 \vee q_3))$, $\Diamond ((q_1 \vee q_3) \wedge \bigcirc (q_2 \vee q_3))$ and 
$\Diamond ((q_2 \vee q_3) \wedge \bigcirc (q_1 \vee q_3))$ are generated.  
\end{example}

The procedures described for producing patterns, 
lead to assumptions which only include environment variables, and are enough for resolving unrealizability in our case studies. 
However, in general, GR(1) assumptions can also include the system variables. 
The procedures can  be easily extended to the general case (see the Appendix). 

The following theorem states that the procedures described in this section, generate the strongest patterns 
of the specified forms. Its proof can be found in the Appendix.
Removing the weaker patterns leads to shorter formulas which are easier for the user to understand. 
It also decreases the number of generated candidates at each step. 
More importantly, it leads to weaker assumptions on the environment that can be used to rule out the counter-strategy. 
If the restriction imposed by any of these candidates is not enough to make the specification realizable, 
the method analyzes the counter-strategy computed for the new specification
to find assumptions that can restrict the environment more. 
This way the counter-strategies guide the method to synthesize assumptions that can be used to achieve realizability.

\begin{theorem}
\label{strongestCandidates}
For any formula of the form $\F \psi, \F\G \psi$, or $\F(\psi_1 \wedge \X \psi_2)$ which hold over all runs of a given FTS $\Tc$, 
there is an equivalent or stronger formula of the same form synthesized by the algorithms described in Section \ref{patterns-synthesis}. 
\end{theorem}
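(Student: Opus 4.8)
The plan is to translate each of the three claims into a statement about configurations, using that for every pattern form the implication (``stronger'') order matches the subset order on the underlying state sets. Writing $\psi_C=\bigvee_{q\in C}q$, I would first verify that $\F\psi_{C'}\rightarrow\F\psi_C$ and $\F\G\psi_{C'}\rightarrow\F\G\psi_C$ are each equivalent to $C'\subseteq C$, and that $C_1'\subseteq C_1$ together with $C_2'\subseteq C_2$ is sufficient for $\F(\psi_{C_1'}\wedge\X\psi_{C_2'})\rightarrow\F(\psi_{C_1}\wedge\X\psi_{C_2})$. The forward directions are immediate from $\psi_{C'}\rightarrow\psi_C$ when $C'\subseteq C$; for the converses I would exhibit, for any $C'\not\subseteq C$, a word that sits forever on a single state of $C'\setminus C$, which satisfies the left-hand side but not the right. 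After this reduction, ``the synthesized pattern is strongest'' becomes ``the algorithm outputs, inside every admissible configuration, an admissible subset,'' and I would treat the three forms in turn.

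For $\F\psi$, let $C$ be any eventually configuration, i.e. $\Tc\models\F\psi_C$. Since a superset of an eventually configuration is again one, I can descend: whenever $C$ is not minimal some proper subset is still eventually, so deleting a suitable single state preserves the property; finiteness of $Q$ makes this terminate at a minimal eventually configuration $C'\subseteq C$. Algorithm \ref{newEventuallyAlg} enumerates precisely the minimal eventually configurations---its test that no infinite run survives deletion of $C'$ is exactly the admissibility check---so $\F\psi_{C'}$ is synthesized and, by the subset criterion, is equal to or stronger than $\F\psi_C$.

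For $\F\G\psi$, I would show $Q^{cycle}\subseteq C$ for every $C$ with $\Tc\models\F\G\psi_C$. Pick $q^\ast\in Q^{cycle}$; it lies on a reachable cycle, so some run visits $q^\ast$ infinitely often, and on that run $\F\G\psi_C$ forces an eventual tail inside $C$, whence the recurring $q^\ast$ must lie in $C$. Conversely $\F\G\psi_{Q^{cycle}}$ holds over all runs, because in a finite graph every infinite run eventually stays among the states it visits infinitely often, each of which recurs and therefore lies on a cycle. Hence the single synthesized formula is admissible and, being supported on the smallest possible set, is stronger than any admissible $\F\G\psi_C$.

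The delicate case, and where I expect the real work, is $\F(\psi_1\wedge\X\psi_2)$. The $\psi_1$ part reduces to the eventually case: $\Tc\models\F(\psi_{C_1}\wedge\X\psi_{C_2})$ gives $\Tc\models\F\psi_{C_1}$, hence a minimal eventually $C_1'\subseteq C_1$ for which the algorithm emits $\F(\psi_{C_1'}\wedge\X\psi_{Next(C_1')})$. To conclude via the subset criterion I must establish $Next(C_1')\subseteq C_2$, and this is the main obstacle: admissibility of the target only guarantees that on each run \emph{some} visit to a $C_1$-state is followed by a $C_2$-state, whereas $Next(C_1')$ collects \emph{all} one-step successors of $C_1'$, which can a priori overshoot $C_2$. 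The crux is therefore to show that minimality of $C_1'$ forces every state of $Next(C_1')$ to actually occur as the relevant successor on some run, and so to lie in $C_2$; reconciling the algorithm's canonical choice $\psi_2=\psi_{Next(C_1')}$ with an arbitrary admissible $\psi_2$ is the step that will demand the most care, and is the natural point at which the argument could require strengthening the hypotheses on the target formula.
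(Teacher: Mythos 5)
Your reduction to configurations and your treatment of the first two forms are correct, and they track the paper's own proof closely. For $\F\psi$ the paper argues exactly as you do---descend from an arbitrary eventually configuration to a minimal one and invoke the fact that Algorithm~\ref{newEventuallyAlg} enumerates all minimal eventually configurations---with two small differences: it first proves a lemma converting an arbitrary propositional $\psi$ over states into an equivalent disjunction of states (a normalization you take for granted), and it states explicitly that completeness of the enumeration requires taking $\beta=2^{|Q|}$, a hypothesis you should also make explicit since for bounded $\beta$ the enumeration is incomplete. For $\F\G\psi$ the paper offers only a one-sentence assertion of minimality; your two-sided argument ($Q^{cycle}\subseteq C$ for every admissible $C$ via a run that recurs on a cycle through a given $q^{\ast}\in Q^{cycle}$, together with admissibility of $Q^{cycle}$ itself) is more complete than what the paper gives.

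The obstacle you isolate in the third case is genuine, and it is not a failure of your proof strategy: the paper disposes of this case with ``the proof for Algorithm~\ref{eventuallyNextAlg} is similar,'' but the step you anticipate cannot in general be carried out, because the statement itself fails for transition patterns under the paper's definition of ``stronger'' (valid implication). Take $Q=\{q_0,a,b,c\}$ with transitions $q_0\to a$, $q_0\to b$, $q_0\to c$, $a\to a$, $b\to b$, $c\to q_0$. The formula $\F\bigl((q_0\vee c)\wedge\X(q_0\vee a\vee b)\bigr)$ holds over all runs (a run that leaves $q_0$ for $a$ or $b$ satisfies it at position $0$; a run that leaves for $c$ satisfies it at position $1$). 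The only minimal eventually configurations are $\{q_0\}$ and $\{a,b,c\}$, so the only synthesized transition patterns are $\F\bigl(q_0\wedge\X(a\vee b\vee c)\bigr)$ and $\F\bigl((a\vee b\vee c)\wedge\X(a\vee b\vee q_0)\bigr)$; neither implies the target (witness the words $q_0,c,c,c,\dots$ and $a,a,a,\dots$ respectively), precisely because $Next(\{q_0\})=\{a,b,c\}$ overshoots $C_2=\{q_0,a,b\}$ in the way you describe. So your instinct that the hypotheses would need strengthening is right: the third clause requires either weakening ``stronger'' to implication relative to runs of $\Tc$, or restricting the admissible targets to those with $Next(C_1)\subseteq C_2$; as written, neither your argument nor the paper's closes this case.
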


\section{Case studies}
\label{sec:casestudies}
We now present two case studies. 
We use RATSY to generate counter-strategies and Cadence SMV model checker \cite{CadenceSMV} to check the consistency of the generated candidates. 
In our experiments, we set $\alpha$ in Algorithm \ref{findAssumptions} to two, 
and $\beta$ in Algorithm \ref{newEventuallyAlg} to the maximum outdegree of the vertices of the counter-strategy's abstract directed graph. 
We slightly change Algorithm \ref{findAssumptions} to find all possible refinements within the specified depth. 


\subsection{Lift Controller}
We borrow the lift controller example from \cite{bloem2012synthesis}. 
Consider a lift controller serving three floors. 
Assume that the lift has three buttons, denoted by the Boolean variables $b_1$, $b_2$ and $b_3$, which are controlled by the environment. 
The location of the lift is represented using Boolean variables $f_1$, $f_2$ and $f_3$ controlled by the system. 
The lift may be requested on each floor by pressing the corresponding button. 
We assume that $(1)$ once a request is made, it cannot be withdrawn, 
$(2)$ once the request is fulfilled it is removed,
and $(3)$ initially there are no requests. 
Formally, the specification of the environment is 
$\phi_e = \phi^e_{init} \wedge \phi^e_{1_1} \wedge \phi^e_{1_2} \wedge \phi^e_{1_3} \wedge \phi^e_{2_1} \wedge \phi^e_{2_2} \wedge \phi^e_{2_3}$, where  
$\phi^e_{init} = (\neg b_1 \wedge \neg b_2 \wedge \neg b_3 )$, 
$\phi^e_{1_i} = \Box(b_i \wedge f_i \rightarrow \bigcirc \neg b_i)$, and 
$\phi^e_{2_i} = \Box(b_i \wedge \neg f_i \rightarrow \bigcirc b_i)$ for  $1 \leq i \leq 3$. 

The lift initially starts on the first floor. We expect the lift to be only on one of the floors at each step. 
It can move at most one floor at each time step. We want the system to eventually fulfill all the requests. 
Formally the specification of the system is given as 
$\phi_s = \phi^s_{init} \wedge \phi^s_1 \bigwedge_i \phi^s_{2,i} \wedge \phi^s_3 \bigwedge_j \phi^s_{4,j} \wedge \phi^s_5$,  
where
\begin{itemize}
 \item $\phi^s_{init} = f_1 \wedge \neg f_2 \wedge \neg f_3$,
  \item $\phi^s_1 = \Box (\neg(f_1 \wedge f_2) \wedge \neg(f_2 \wedge f_3) \wedge \neg (f_1 \wedge f_3))$,
  \item $\phi^s_{2,i} = \Box (f_i \rightarrow \bigcirc (f_{i-1} \vee f_i \vee f_{i+1}))$,
  \item $\phi^s_3 = \Box((f_1 \wedge \bigcirc f_2) \vee (f_2 \wedge \bigcirc f_3) \rightarrow (b_1 \vee b_2 \vee b_3))$, and
  \item $\phi^s_{4,j}=\Box \Diamond (b_j \rightarrow f_j)$.
\end{itemize}

The requirement $\phi^s_3$ says that the lift moves up one floor only if some button is pressed. 
The specification $\phi = \phi_e \rightarrow \phi_s $ is realizable. Now assume that the designer wants to ensure that all floors are infinitely often visited; thus she 
adds the guarantees $\bigwedge_j \phi^s_{5,j}$ where $\phi^s_{5,j} = \Box \Diamond (f_j)$ to the set of system requirements. 
The specification $\phi' =  \phi_e \rightarrow \phi_s \bigwedge_j \phi^s_{5,j}$ is not realizable. 
A counter-strategy for the environment is to always keep all  $b_i$'s low. 
We run our algorithms with the set of all the environment variables $\set{b_1,b_2,b_3}$ for all assumption forms. 
The algorithm generates the refinements $\psi_1=\Box \Diamond (b_1 \vee b_2 \vee b_3)$
  and $\psi_2=\Box ((\neg b_1 \wedge \neg b_2 \wedge \neg b_3) \rightarrow \bigcirc (b_1 \vee b_2 \vee b_3))$. 
Refinement $\psi_1$ requires that the environment infinitely often presses a button. 
Refinement $\psi_2$ is another suggestion which requires the environment to make a request after any inactive turn. 
Refinement $\psi_1$ seems to be more reasonable and the user can add it to the specification to make it realizable. 

Only one counter-strategy is processed during the search for finding refinements
and three candidate assumptions are generated overall, where one of the candidates 
is inconsistent with $\phi'$ and the two others are refinements $\psi_1$ and $\psi_2$. 
Thus, the search terminates after checking the generated assumptions at first level. 
Only $0.6$ percent of total computatuion time was spent on generating candidate assumptions from the counter-strategy.
Note that to generate $\psi_1$ using the template-based method in \cite{LiDS11}, 
the user needs to specify a template with three variables which leads to $2^3=8$ candidate assumptions, 
although only one of them is satisfied by the counter-strategy.

\subsection{AMBA AHB}
ARM's Advanced Microcontroller Bus Architecture (AMBA) defines the Advanced High-Performance Bus (AHB) which is an on-chip communication protocol. 
Up to $16$ \emph{masters} and $16$ \emph{slaves} can be connected to the bus. 
The masters start the communication (read or write) with a slave and the slave responds to the request. 
Multiple masters can request the bus at the same time, but the bus can only 
be accessed by one master at a time. 
A bus access can be a single \emph{transfer} or a \emph{burst}, which consists of multiple number of transfers. 
A bus access can be locked, which means it cannot be interrupted.
Access to the bus is controlled by the \emph{arbiter}. 
More details of the protocol can be found in \cite{bloem2012synthesis}. 
We use the specification given by one of RATSY's example files (amba02.rat). 
There are four environment signals:

\begin{itemize}
 \item $\tt{HBUSREQ}$[$i$]: Master $i$ requests access to the bus. 
  \item $\tt{HLOCK}$[$i$]: Master $i$ requests a locked access to the bus. This signal is raised in combination with $\tt{HBUSREQ}$[$i$].
  \item $\tt{HBURST}$[$1:0$]: Type of transfer. Can be SINGLE (a single transfer), BURST4 (a four-transfer), or INCR (unspecified length burst).
  \item $\tt{HREADY}$: Raised if the slave has finished processing the data. The bus owner can change and transfers can start only when HREADY is high. 
\end{itemize}
The first three signals are controlled by the masters and the last one is controlled by the slaves.
The specification of amba02.rat consists of one master and two slaves. 
For our experiment, we remove the fairness assumption $\Box \Diamond \tt{HREADY}$ from the specification. 
The new specification is unrealizable. 
We run our algorithm with the sets of variables $\set{\tt{HREADY}}$, $\set{\tt{HREADY},\tt{HBUSREQ[0]} , \tt{HBUSREQ[1]} , \tt{HLOCK[0]} , \tt{HLOCK[1]}}$, $\set{\tt{HREADY}}$ and 
$\set{\tt{HBUSREQ[0] , HBUSREQ[1]}}$ to be used in liveness, safety, left and right hand side of transition assumptions, respectively. 
Some of the refinements generated by our method are:  $\psi_1 = \GF\tt{HREADY}$,  $\psi_2 = \G (\tt{HREADY}\vee\neg \tt{HBUSREQ[0]}\vee\neg \tt{HLOCK[0]}\vee\neg \tt{HBUSREQ[1]}\vee\neg \tt{HLOCK[1]})\wedge\GF \tt{HREADY}$, and 
$\psi_3 = 
	 \G (\tt{HREADY} \rightarrow \bigcirc \neg HBUSREQ[0]) \wedge \G (\neg \tt{HREADY} \rightarrow \bigcirc \neg HBUSREQ[0])$. 
	 Note that although $\psi_2$ is a consistent refinement, it includes $\psi_1$ as a subformula and it is more restrictive. 
	 The refinement $\psi_3$ implies 
that $\tt{HBUSREQ[0]}$ must always be low from the second step on. 
	 Among these suggested refinements, $\psi_1$ appears to be the best option. Our method only processes one counter-strategy with five states and generates five 
candidate assumptions to find the first refinement $\psi_1$. To find all refinements within the depth two, overall five counter-strategies are processed 
by our method during the search,  where the largest counter-strategy had $25$ states. 
The number of assumptions generated for each counter-strategy during the search is less than nine. 
$28.6$ percent of total computation time was spent on generating candidate assumptions from the counter-strategies.

\section{Conclusion and Future Work}
We presented a counter-strategy guided  approach for adding environment assumptions to an unrealizable specifications in order to achieve realizability. 
We gave algorithms for synthesizing weakest assumptions of certain forms (based on ``patterns'') that can be used to rule out the counter-strategy. 

We chose to apply explicit-state graph search algorithms on the
counter-strategy because the available tools for solving games output the counter-strategy as a graph in an explicit form.
Symbolic analysis of the counter-strategy may be  desirable
for scalability, but the key challenge for this is to develop algorithms for solving games that
can produce counter-examples in compact symbolic form. Synthesizing symbolic patterns is one of the future directions.

Counter-strategies provide useful information for explaining reasons for unrealizability. 
However, there can be multiple ways to rule out a counter-strategy. 
We plan to investigate how the multiplicity of the candidates generated  by our method can be used to synthesize better assumptions.  
Furthermore, our method asks the user for  subsets of variables to be used in generating candidates. 
The choice of the subsets can significantly impact how fast the algorithm can find a refinement. 
Automatically finding good subsets of variables that contribute to the unrealizability problem is another future direction. 
Synthesizing environment assumptions for more general settings, and using the method for synthesizing interfaces between components in context of 
compositional synthesis  are subject to our current work.

\label{sec:conclusion}


\bibliographystyle{plain}
\bibliography{papers}
%

\clearpage
\section{Appendix}
\label{appendix}

\subsection{Simplifying the Generated Candidates}
\label{simple}
Some simple techniques are used to simplify the generated candidate assumptions for a given counter-strategy. 
We explain them over a synthesized liveness assumption $\GF \phi$. Other forms are simplified similarly. 
Note that $\phi$ is of conjunctive normal form, that is, $\phi=\bigwedge_j \bigvee_i \ell_i^j$ where $\ell^j_i$ is 
a literal over a Boolean variable $x_i$ in a clause $c_j$. 
The clauses correspond to the complement of the state predicates in our method. 
First, if a literal over a Boolean variable has the same form in all clauses, it is factored out from the clauses. 
For example, consider the formula $\phi=(a \vee b) \wedge (a \vee \neg b) \wedge (a \vee \neg b) \wedge (a \vee b)$, where 
$a$ and $b$ are Boolean variables. $a$ can be factored out giving $\phi=a \vee ((b) \wedge (\neg b) \wedge (\neg b) \wedge (b))$. 
We scan the formula and remove the repetitive clauses, for example $(\neg b)$ and $(b)$ clauses are repetitive in $\phi$, 
thus it can be simplified to $\phi=a \vee ((\neg b) \wedge (b))$. 
Finally, if there are two clauses with one variable, the formula can further be simplified as $\phi= a \vee False = a$. 
In future we plan to find better simplifying techniques for more general candidate assumptions. 
These simplifications is important because  one of our goals is to generate formulas which are easy for the user to understand.

\subsection{Algorithms}
Algorithm \ref{eventuallyAlwaysAlg} and Algorithm \ref{eventuallyNextAlg} generate the patterns of the forms 
$\F\G \psi$ and $\F(\psi_1 \wedge \bigcirc \psi_2)$, respectively.
\begin{algorithm}[h]
  \KwIn{Counter-strategy's abstract FTS $\Tc=\seq{Q,\set{q_0},\delta}$}
  \KwOut{A set of patterns of the form $\Diamond \Box \psi$ where $\Tc \models \Diamond \psi$}
  Let $Q^{cycle} =  \{q \in Q ~|~ \exists \text{ a cycle } \in \Tc \text{ including } q \}$\;
  return $\psi=\Diamond \Box \bigvee_{q \in Q^{cycle}} q$\;
  \caption{Generating $\Diamond \Box \psi$ patterns}
  \label{eventuallyAlwaysAlg}
\end{algorithm}

\begin{algorithm}[h]
  \KwIn{Counter-strategy's abstract FTS $\Tc=\seq{Q,\set{q_0},E}$}
\KwIn{$\beta$, maximum number of states in $\psi_1$ in generated patterns}
  \KwOut{a set of patterns of the form $\Diamond (\psi_1 \wedge \bigcirc \psi_2)$ where $\Tc \models \Diamond \psi$}
  $\F \text{Patterns}$ = patterns generated by Algorithm \ref{newEventuallyAlg} with input $\Tc$ and $\beta$\;
  Let Patterns = \textbf{Empty}\;
  \ForEach{formula $\Diamond \psi \in \F \text{Patterns}$}{
      Patterns = Patterns $\cup$ $\Diamond (\psi \wedge \bigcirc \bigvee_{q \in Next(\psi)} q)$\;
  }
  return Patterns;
  \caption{Generating $\Diamond (\psi_1 \wedge \bigcirc \psi_2)$ patterns}
  \label{eventuallyNextAlg}
\end{algorithm}

\subsection{Extending patterns to include system variables}
To be able to include system variables, we extend the finite state transition system with labels over transitions which are propositions over 
system variables, $O$. Formally an extended FTS is a tuple $\Tc^{ext}=\seq{Q,\set{q_0},\mathcal{L},\delta}$ where $Q,q_0$ and $\delta$ is similar to what we had before and 
$\mathcal{L}: \delta \rightarrow 2^O$ is a labeling function which maps each transition to a proposition over system variables. 
For each transition $e \in \delta$, $\mathcal{L}(e)=\bigwedge_i \ell_i$ where each $\ell_i$ is a literal over a variable $y_i \in O$.
Generated patterns are of the form 
$\F (\psi \wedge \bigvee_{e_i \in \text{ outgoing}(\psi)} \mathcal{L}(e_i)) , \F\G (\psi \wedge \bigvee_{e_i \in \text{ outgoing}(\psi)} \mathcal{L}(e_i))$ 
and $\Diamond ((\psi_1 \wedge \bigvee_{e_i \in \text{ outgoing}(\psi_1)} \mathcal{L}(e_i) \wedge \bigcirc \bigvee_{q_i \in Next(\psi)} q_i)$ 
where $outgoing(\psi)$ is the set of transitions going out of states included in $\psi$, i.e., transitions $e=(q_i,q_j) \in \delta$ such that $q_i \in \mathcal{C}_\psi$, and 
$\mathcal{C}_\psi$ is the set of states included in the formula $\psi$.

\subsection{Proof of Theorem \ref{strongestCandidates}}
Note that if $C$ is an eventually configuration, then any configuration $C'$ such that $C \subset C'$ is also an eventually configuration. 
Moreover,  $\F\bigvee_{q \in C} q \rightarrow \F\bigvee_{q' \in C'} q'$, that is, the formula corresponding to $C$ is stronger than the one corresponding to $C'$. 


We use the following lemma in proof of Theorem \ref{strongestCandidates}. 
Intuitively it says that any propositional formula $\phi$ over states $Q$ of $\Tc$ that hold over some run of it, can be written as 
disjunction of a subset of the states in $Q$. 

\begin{lemma}
\label{lemma:transform}
 Let $\Tc=\seq{Q,\set{q_0},\delta}$ be a finite transition system. 
Consider a propositional formula $\phi$ over the states in $Q$. Assume there exists a run $\sigma$ of $\Tc$ and $i \geq 0$ such that $\sigma_i \models \phi$. 
Then there exists $Q_{\phi'} \subseteq Q$ such that the formulas $\phi$ and $\phi'= \bigvee_{q \in Q_{\phi'}} q$ are equivalent. 
\end{lemma}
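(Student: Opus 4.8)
The plan is to exploit the fact that, although $\phi$ is an arbitrary propositional formula over the atomic propositions $Q$, these propositions are far from independent when interpreted along runs of $\Tc$: at any position $\sigma_i$ of a run the current state is the unique true proposition while every other state is false. Hence the only valuations of $Q$ that ever arise are the ``one-hot'' valuations $v_q$, one for each $q \in Q$, where $v_q$ assigns $\mathtt{True}$ to $q$ and $\mathtt{False}$ to every other state. The equivalence asserted in the lemma must therefore be read as equivalence over exactly these valuations, i.e.\ as formulas evaluated at states of $\Tc$, rather than as ordinary propositional equivalence over all $2^{|Q|}$ assignments. I expect pinning down this semantics to be the only real obstacle, since for example $\phi = q_1 \vee (q_2 \wedge q_3)$ is not propositionally equivalent to $q_1$, yet the two agree on every one-hot valuation and so are equivalent in the intended sense.

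First I would make precise that for any run $\sigma$ and any position $i$, the truth value of $\phi$ at $\sigma_i$ equals the value of $\phi$ under $v_{\sigma_i}$, so that the truth of $\phi$ along runs depends only on the current state. Next I would define
\[
Q_{\phi'} = \set{q \in Q ~|~ v_q \models \phi},
\]
the set of states at which $\phi$ holds under its one-hot valuation, and put $\phi' = \bigvee_{q \in Q_{\phi'}} q$. The core step is then a direct verification: for each state $q$ we have $v_q \models \phi'$ iff $q \in Q_{\phi'}$, because $v_q$ makes exactly the disjunct $q$ true and every other disjunct false; and by the definition of $Q_{\phi'}$ this holds iff $v_q \models \phi$. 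Thus $\phi$ and $\phi'$ agree at every one-hot valuation, hence hold at exactly the same positions of the same runs, which is precisely the claimed equivalence.

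Finally, the hypothesis that some run $\sigma$ satisfies $\phi$ at some position $i$ guarantees that the state $\sigma_i$ lies in $Q_{\phi'}$, so $Q_{\phi'} \neq \emptyset$ and $\phi'$ is a genuine (non-empty) disjunction of states rather than the empty disjunction $\mathtt{False}$. Because all states of $\Tc$ are assumed reachable from $q_0$ and to have outgoing edges, each valuation $v_q$ is actually realized along some run, so no subtlety about unrealized valuations arises and the equivalence is over the full set of states. A structural induction on $\phi$ would give the same result, but the valuation-based argument above is shorter and isolates the single idea—the one-hot constraint—on which the whole statement rests.
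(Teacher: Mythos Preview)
Your proof is correct and takes a genuinely different route from the paper's. The paper argues \emph{syntactically}: assuming without loss of generality that $\phi$ is built from $\neg$ and $\vee$, it pushes out negations bottom-up via the rule $\neg \bigvee_{q \in Q'} q \equiv \bigvee_{q \in Q \setminus Q'} q$ (valid under the one-hot semantics), and then takes $Q_{\phi'}$ to be the set of states occurring in the resulting negation-free formula. This is effectively a structural induction presented as a rewriting procedure.

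Your argument is \emph{semantic}: you define $Q_{\phi'}$ directly as $\set{q \in Q \mid v_q \models \phi}$ and check that $\phi$ and $\bigvee_{q \in Q_{\phi'}} q$ agree on every one-hot valuation. This is shorter and, more to the point, it makes explicit what ``equivalent'' must mean here---agreement over the one-hot valuations realized along runs, not propositional equivalence over all $2^{|Q|}$ assignments. The paper's proof relies on exactly the same restriction (its negation rule fails over arbitrary valuations) but never states it. You also explain the role of the satisfiability hypothesis (it forces $Q_{\phi'}\neq\emptyset$), which the paper's proof does not mention. In short, the paper gives the inductive construction you allude to in your final paragraph; your valuation-based argument is the cleaner alternative you advertise.
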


\begin{proof}
Without loss of generality assume that $\phi$ only includes negation and disjunction connectives. 
Note that for any run $\sigma$ of $\Tc$, $\sigma_i \models \neg \bigvee_{q \in Q'} q$ for some $Q' \subseteq Q$ and $i \geq 0$ 
if and only if $\sigma_i \models \bigvee_{q \in Q-Q'} q$. 
Therefore, subformulas of the form $\neg  \bigvee_{q \in Q'} q$ can be replaced by $\bigvee_{q \in Q-Q'} q$.  
Using this rule, all negation operators can be removed from $\phi$ to obtain an equivalent formula $\phi'$. 
Let $Q_{\phi'} \subseteq Q$ be the set of states $q \in Q$ which appears in $\phi'$. 
It follows that the formulas $\phi$ and $\phi'= \bigvee_{q \in Q_{\phi'}} q$ are equivalent.  
\end{proof}


\begin{proof}[Proof of Theorem \ref{strongestCandidates}]

We prove the theorem for Algorithms \ref{newEventuallyAlg} and \ref{eventuallyAlwaysAlg}.  
The proof for Algorithm \ref{eventuallyNextAlg} is similar.
First consider the eventually formulas $\F \psi$ generated by Algorithm \ref{newEventuallyAlg}. 
We assume that $\beta = 2^{|Q|}$, that is, the algorithm finds all minimal eventually configurations. 
Assume there exists a formula $\F \phi$ which holds over all runs of $\Tc$. 
By Lemma \ref{lemma:transform} there exists $Q_{\phi} \subseteq Q$ such that $\F \phi = \F (\bigvee_{q \in Q_{\phi}} q)$. 
Since $\F \phi$ holds over all runs of $\Tc$, $Q_{\phi}$ must be an eventually configuration. 
Algorithm \ref{newEventuallyAlg} finds all minimal eventually configurations of $\Tc$. 
Therefore, there exists a minimal eventually configuration $Q_{\psi} \subseteq Q$ corresponding to a 
formula $\F \psi$ generated by Algorithm \ref{newEventuallyAlg} such that $Q_{\psi} \subseteq Q_{\phi}$. 
It follows that $\F \psi \rightarrow \F \phi$. That is, there exists a formula generated by Algorithm \ref{newEventuallyAlg} which is stronger 
than or equivalent to $\F \phi$.

Eventually always formula $\F\G \psi$ generated by Algorithm \ref{eventuallyAlwaysAlg} is such that removing any state from $\psi$ makes the formula 
unsatisfiable and adding any state to it makes the formula weaker. 
Thus any formula $\F\G \phi$ which holds over all runs of $\Tc$ should be equivalent to or weaker 
than $\F\G \psi$.
\end{proof}

\end{document}